\documentclass[aps,prl,twocolumn,superscriptaddress,floatfix,nofootinbib,  raggedbottom]{revtex4-2}
\usepackage{amsmath,amssymb,bm,braket,graphicx,hyperref,xcolor,amsthm}

\newtheorem{theorem}{Theorem}
\newtheorem{definition}{Definition}

\begin{document}

\title{Reshaping quantum annealing landscapes with diagonal catalysts}

\author{Andrés N. Cáliz}
\affiliation{Qilimanjaro Quantum Tech, 08019 Barcelona, Spain}
\affiliation{Departament de Física, Universitat de Barcelona, 08007 Barcelona, Spain}

\author{Carlos Ramon-Escandell}
\affiliation{Qilimanjaro Quantum Tech, 08019 Barcelona, Spain}

\author{Finnley Paolella}
\affiliation{Qilimanjaro Quantum Tech, 08019 Barcelona, Spain}

\author{Josep Bosch}
\affiliation{Qilimanjaro Quantum Tech, 08019 Barcelona, Spain}
\affiliation{Universitat Politècnica de Catalunya, Carrer de Jordi Girona 1-3, 08034 Barcelona, Spain}

\author{Jan Nogué}
\affiliation{Qilimanjaro Quantum Tech, 08019 Barcelona, Spain}
\affiliation{Universitat Politècnica de Catalunya, Carrer de Jordi Girona 1-3, 08034 Barcelona, Spain}

\author{Arnau Riera}
\affiliation{Qilimanjaro Quantum Tech, 08019 Barcelona, Spain}

\author{Jordi Riu}
\email{jordi.riu@qilimanjaro.tech}
\affiliation{Qilimanjaro Quantum Tech, 08019 Barcelona, Spain}

\date{\today}

\begin{abstract}
Quantum annealing is often limited by population trapped in local minima many spin flips from the solution. We introduce a mathematical framework to understand the connection between energy and Hamming distance in optimization problems. Using this, we build ZZ-catalysts from ground-state patterns of small frustration-free subproblems that make configurations far from the solution less energetically competitive. On sparse problems they multiply the near-solution probability at short sweeps, with gains persisting on fully-connected models and tunable via subproblem choice.
\end{abstract}

\maketitle

\emph{Introduction}---Quantum annealing (QA) is an optimization algorithm that encodes the solution of a classical optimization problem in the ground state of a cost Hamiltonian and attempts to reach it by adiabatic evolution from a simple quantum driver Hamiltonian~\cite{Kadowaki1998,Farhi2000,Santoro2002,Albash2018,Hauke2020}. The target is a Quadratic Unconstrained Binary Optimization (QUBO) cost, which for \(n\) spin variables \(z_i=\pm1\) takes the form
\begin{equation}
  H_P=\sum_i h_i Z_i+\sum_{i<j}J_{ij}Z_iZ_j,
  \label{eq:qubo}
\end{equation}
with \(Z_i\) diagonal Pauli operators. A vast range of NP-hard problems map directly onto this class~\cite{Lucas2014}. In this work, we restrict ourselves to the field-free pairwise case $H_P = \sum_{i<j}J_{ij}Z_iZ_j$, with $h_i=0$, which already contains the essential difficulty. The annealer interpolates linearly between the transverse-field driver and the problem,
\begin{equation}
  H(s)=-(1-s)\sum_i X_i+s\,H_P,\qquad s:0\to1,
  \label{eq:schedule}
\end{equation}
starting from the ground state of \(-\sum_iX_i\), namely the uniform superposition in the computational basis, and ending at the ground state of \(H_P\). By the adiabatic theorem, remaining in the instantaneous ground state requires an evolution time $T$ that scales inversely with the square of the minimum gap to a dynamically accessible excited state, $T\gtrsim\Delta_{\min}^{-2}$, up to prefactors and matrix-element dependence \cite{Jansen2007}. On hard instances, this gap can become exponentially small, making fully adiabatic evolution impractical ~\cite{Jorg2008,Young2010,Altshuler2010}. Finite-time anneals therefore often operate diabatically, where population can be redistributed at avoided crossings, i.e., points where eigenstate branches approach and are separated only by a small gap. Such crossings can occur across a broad energy range, rather than only between the ground and first excited states.

Avoided crossings arise between states localized in distinct basins of the cost landscape. When two such states are separated by a large Hamming distance $d_H=D$, their coupling emerges only at $D$th order in the transverse field, producing a perturbatively small gap,
\begin{equation}
  \Delta_{\mathrm{ac}}
  \sim
  \Gamma
  \left(\frac{\Gamma}{\Delta E}\right)^{D-1}
  =
  \Gamma e^{-(D-1)\ln(\Delta E/\Gamma)},
  \label{eq:gapD}
\end{equation}
with $\Gamma$ the transverse-field scale and $\Delta E$ a typical intermediate energy \cite{Amin_2008}. A sequence of such crossings can redistribute population throughout the energy spectrum.

Two broad strategies mitigate the resulting diabatic losses. Counterdiabatic (CD) driving~\cite{Berry2009,GueryOdelin2019} augments the driver with the adiabatic gauge potential, cancelling diabatic transitions at any sweep rate. In practice the gauge potential is known only as a series of nested commutators~\cite{SelsPolkovnikov2017}, analytically intractable for generic instances, and its leading operators for an Ising annealer ($Y_iZ_j$ and higher) are off-diagonal, many-body, and not implementable on near-term hardware. The second strategy adds a catalyst Hamiltonian, active only during the sweep~\cite{Farhi2002,Crosson2020}. Off-diagonal, non-stoquastic \(XX\)-type catalysts can open gaps~\cite{Seki2012,Hormozi2017,Albash2019,Takada2021,Feinstein2024,Ghosh2024,Nutricati2024}, but emulating them requires dedicated gadgets~\cite{Banks2025}. Purely diagonal catalysts instead reshape the classical cost using native longitudinal controls and have attracted recent interest~\cite{AlbashKowalsky2021,Hattori2026}. Their reach, however, has seemed fundamentally limited. Local-field catalysts enlarge the minimum gap exponentially only when the added field points toward a configuration a few spin flips away from the true ground state~\cite{AlbashKowalsky2021}.

In this Letter, we construct two-body diagonal catalysts from the coupling structure of the problem alone, with no knowledge of the solution, that concentrate the final population not only at low energy but also close to the solution in Hamming distance.\newline

\emph{Theoretical framework and catalyst design}---
We begin by organizing the \(2^n\) configurations according to their distance from a reference configuration.

\begin{definition}[Hamming shell]
For a reference configuration \(z^\star\in\{\pm1\}^n\), the Hamming shell of radius \(d\) is
\begin{equation}
    S_{d,z^\star}
    =
    \left\{
    z\in\{\pm1\}^n:
    d_H(z,z^\star)=d
    \right\}.
    \label{def:hamming_shell}
\end{equation}
\end{definition}
The notation \(\braket{\cdot}_{d,z^\star}\) and \(\operatorname{Var}_{d,z^\star}(\cdot)\) denote the uniform mean and variance over this shell. Let \(\mathcal G\) denote the set of computational-basis ground states. The distance of a configuration \(z\) to its closest state in $\mathcal{G}$ is then
\begin{equation}
\delta(z)=\min_{g\in\mathcal G} d_H(z,g).
\end{equation}
For the field-free problem considered here, global-spin-flip symmetry implies that \(-z^\star\) is also a ground state whenever \(z^\star\) is. In the absence of additional degeneracies,
\(\mathcal G=\{\pm z^\star\}\) and
\(
\delta(z)
=\min\{d,n-d\} \leq \frac{n}{2}.
\)

Under \(H_P\), the energy bands associated with different Hamming shells overlap Fig.~\ref{fig:catalyst_ordering}(a). Distant configurations can consequently undercut nearby ones, so low energy need not imply proximity to the solution. Such configurations can belong to competing basins associated with perturbative crossings; shell overlap provides a coarse-grained measure of this competition without requiring individual minima to be identified.

In the ideal distance-only limit, shells are perfectly ordered and no distant configuration undercuts a nearer one. Our goal is therefore to construct a diagonal catalyst \(C\) that brings \(H_P+C\) closer to this limit Fig.~\ref{fig:catalyst_ordering}(b), while ideally preserving the ordering induced by \(H_P\) among configurations within the same shell.
\begin{figure}[t]
    \centering
    \includegraphics[width=0.9\linewidth]{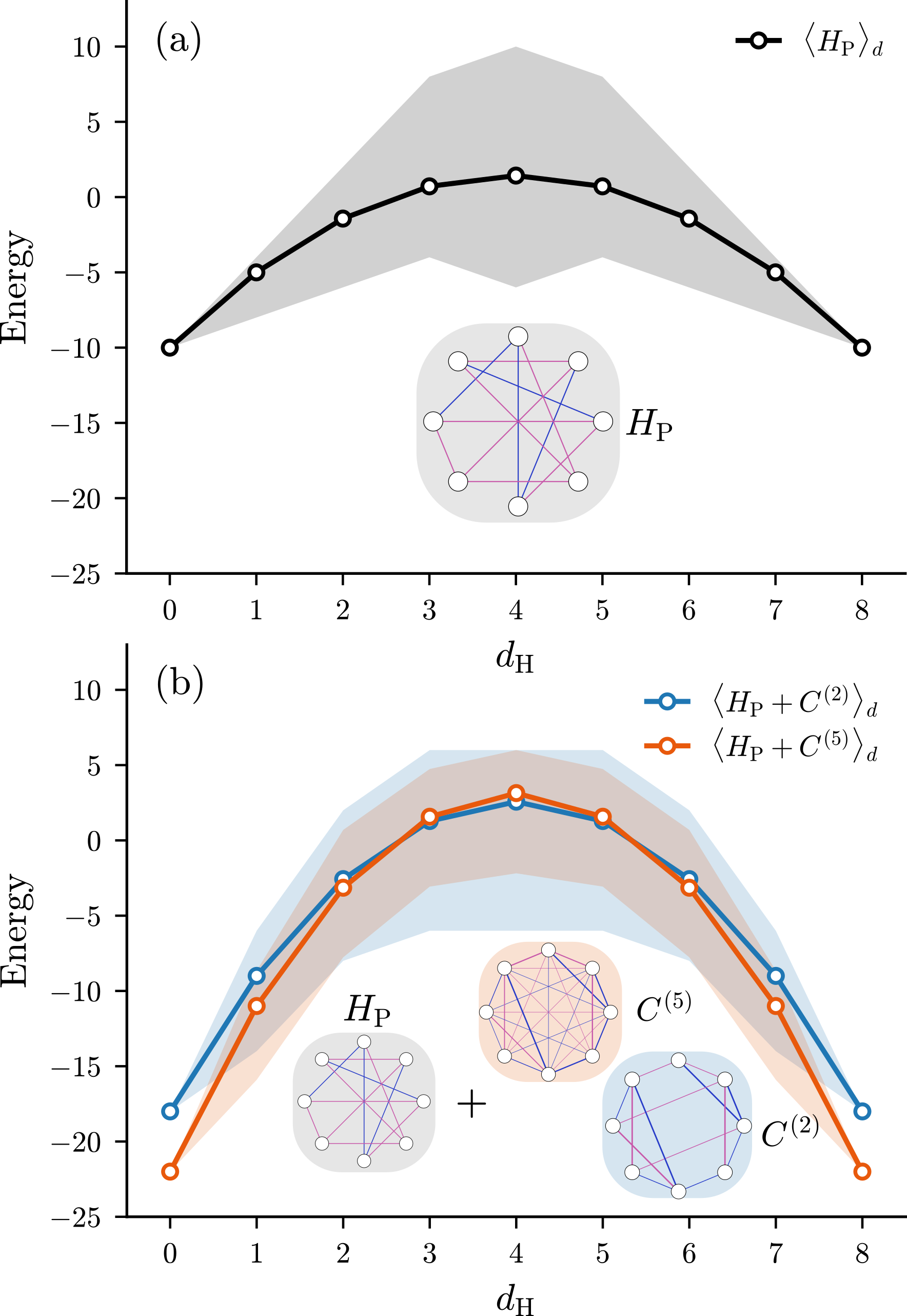}
    \caption{\textbf{Effect of the diagonal catalyst.} Shell mean energies
(open markers) and full within-shell energy range (shaded bands) vs.\
Hamming distance to a ground state \(z^\star\), for a signed 3-regular instance (\(n=8\)). (a)~Bare problem: the means trace the parabola of Eq.~\eqref{eq:parabola}, but the bands are broad and overlap, making configurations many flips from \(z^\star\) undercut close ones (see \(d_H=4\)) and providing low-energy candidates for competing basins. (b)~Open-path catalysts of order 2 (\(C^{(2)}\), blue) and order 5 (\(C^{(5)}\), orange) rescale the mean curve as dictated by Eq.~\eqref{eq:combinedparabola} and shrink the bands, improving the energy--distance correspondence; both effects grow with order. Insets: coupling graphs of \(H_P\), \(C^{(2)}\), and \(C^{(5)}\); blue edges denote \(J_{ij}>0\), red edges \(J_{ij}<0\), and thickness is proportional to coupling strength.}
    \label{fig:catalyst_ordering}
\end{figure}

\begin{theorem}[Hamming-shell moments]
Let
\begin{equation}
    H=\sum_{i<j}W_{ij}Z_iZ_j
\end{equation}
be any field-free two-body Ising operator on \(n\geq4\) spins, with \(W\) the corresponding symmetric zero-diagonal coupling matrix. For any reference configuration \(z^\star\) and any \(d\in\{0,\ldots,n\}\), its exact mean and variance over \(S_{d,z^\star}\) are
\begin{align}
    \braket{H}_{d,z^\star}
    &=
    \mu_2(d)H(z^\star),\qquad \mu_2(d) = \frac{(n-2d)^2-n}{n(n-1)},
    \label{eq:parabola}
    \\
    \operatorname{Var}_{d,z^\star}(H)
    &=
    S_2\!\left[1-\mu_4(d)\right] + U\!\left[\mu_2(d)-\mu_4(d)\right]
    \notag\\
    &\quad+ \left[\mu_4(d)-\mu_2(d)^2\right]H(z^\star)^2,
    \label{eq:variance}
\end{align}
where
\begin{equation}
S_2=\sum_{i<j}W_{ij}^2, \qquad U=\left\|Wz^\star\right\|^2-2S_2,
\end{equation}
and
\begin{equation}
    \mu_4(d) = \frac{(n-2d)^4-(6n-8)(n-2d)^2+3n^2-6n}{n(n-1)(n-2)(n-3)}.
    \label{eq:fourmoment}
\end{equation}
\label{th:1}
\end{theorem}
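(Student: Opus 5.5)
The plan is to gauge away the reference configuration and reduce everything to moments of a uniformly random subset of flipped spins. Write \(z=z^\star\odot\sigma\), where \(\sigma_i=-1\) on a uniformly random subset \(F\) of size \(d\) and \(\sigma_i=+1\) otherwise. Then
\[
H(z)=\sum_{i<j}A_{ij}\sigma_i\sigma_j,\qquad A_{ij}=W_{ij}z^\star_iz^\star_j .
\]
With this notation \(H(z^\star)=\sum_{i<j}A_{ij}\) and \(S_2=\sum_{i<j}A_{ij}^2\). Moreover
\[
\|Wz^\star\|^2=\sum_i\Big(\sum_j A_{ij}\Big)^2=2S_2+2\sum_{i}\sum_{j\neq k,\;j,k\neq i}A_{ij}A_{ik},
\]
so \(U\) is exactly twice the sum, over ordered pairs of distinct edges sharing one vertex, of the product of their weights. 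The whole problem is now about mixed moments of the exchangeable \(\pm1\) vector \(\sigma\) with fixed sum \(m=n-2d\).

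First, for distinct indices, define \(\mu_k(d)=\mathbb E[\sigma_{i_1}\cdots\sigma_{i_k}]\); exchangeability makes it independent of the indices. For \(k=2\), expand \((\sum_i\sigma_i)^2=m^2=n+n(n-1)\mu_2\). This gives \(\mu_2=(m^2-n)/(n(n-1))\), and the mean \eqref{eq:parabola} follows by linearity.

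For \(k=4\), expand \(m^4=(\sum_i\sigma_i)^4\) and group terms by index pattern, using \(\sigma_i^2=1\):
\[
m^4=3n^2-2n+(6n-8)\,n(n-1)\mu_2+n(n-1)(n-2)(n-3)\mu_4 .
\]
To get this, count ordered 4-tuples by type. Type \(iiii\) gives \(n\). Type \(iijj\) gives \(3n(n-1)\). Type \(iiij\) gives \(4n(n-1)\mu_2\). Type \(iijk\) gives \(6n(n-1)(n-2)\mu_2\). All-distinct tuples give \(n(n-1)(n-2)(n-3)\mu_4\). Substituting \(\mu_2\) and simplifying should reproduce \eqref{eq:fourmoment}, which is where \(n\ge4\) enters. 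I expect this combinatorial bookkeeping to be the main obstacle, and I would check it against \(d=0\), where \(\mu_4=1\).

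Next, compute the second moment
\[
\mathbb E[H^2]=\sum_{e,f}A_eA_f\,\mathbb E[\sigma_e\sigma_f],
\]
where \(\sigma_e=\sigma_i\sigma_j\) for \(e=\{i,j\}\). Split the ordered pairs of edges into three classes by their overlap:
\begin{itemize}
\item \(e=f\): the expectation is \(1\), and these terms contribute \(S_2\).
\item \(|e\cap f|=1\): \(\sigma_e\sigma_f\) reduces to a product of two distinct spins, so the expectation is \(\mu_2\), and these terms contribute \(U/2\cdot 2\cdot\tfrac12\)-normalized, i.e. \(\mu_2 U\) with \(U\) as identified above.
\item Disjoint edges: the expectation is \(\mu_4\). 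Their total weight is \(H(z^\star)^2-S_2-U\), since \(\big(\sum_e A_e\big)^2\) splits into exactly these three classes.
\end{itemize}
Hence
\[
\mathbb E[H^2]=S_2+\mu_2U+\mu_4\big(H(z^\star)^2-S_2-U\big).
\]
Subtracting \(\mu_2^2H(z^\star)^2\) gives \eqref{eq:variance} directly.

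The only delicate point in this step is the factor of 2 in \(U\). I would verify it on a single path of two edges, checking that the ordered pairs of edges sharing a vertex carry total weight \(2A_{ij}A_{ik}\), consistent with \(\|Wz^\star\|^2-2S_2\).
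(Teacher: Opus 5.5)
Your strategy is sound, and apart from how you obtain the moments \(\mu_k\) it is the paper's: gauge to agreement variables, then split \(\mathbb E[H^2]\) over ordered pairs of edges into the classes \(e=f\), \(|e\cap f|=1\), and disjoint. However, the step you flagged as delicate is wrong as written. Your identity \(\|Wz^\star\|^2=2S_2+2\sum_i\sum_{j\neq k}A_{ij}A_{ik}\) carries a spurious factor of \(2\). The sum over \(j\neq k\) already runs over ordered pairs, so expanding the square gives
\[
\sum_i\Bigl(\sum_j A_{ij}\Bigr)^2=2S_2+\sum_i\sum_{j\neq k}A_{ij}A_{ik}.
\]
Moreover, \((i,j,k)\mapsto(\{i,j\},\{i,k\})\) is a bijection onto ordered pairs of distinct edges sharing exactly one vertex. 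So \(U\) is the total weight of that class counted once, which is exactly the paper's identity \(\sum_i(\sum_j\hat W_{ij})^2=2S_2+U\).

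Under your stated identification, with \(U\) equal to twice that weight, the shared-vertex class would contribute \(\mu_2U/2\). The disjoint class would then have weight \(H(z^\star)^2-S_2-U/2\), and the resulting variance would differ from the theorem's. The phrase ``\(U/2\cdot2\cdot\tfrac12\)-normalized'' is not an argument. You reach the correct \(\mathbb E[H^2]\) only because your next two bullets silently use the correct normalization. Your own two-edge check exposes the slip: for edges \(\{i,j\},\{i,k\}\) one finds \(\|Wz^\star\|^2-2S_2=2A_{ij}A_{ik}\), which is the ordered-pair weight itself. Your displayed identity would instead assert \(4A_{ij}A_{ik}\).

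With that correction the proof goes through, and your derivation of \(\mu_2,\mu_4\) is a genuinely different and correct route. The paper writes \(\mu_k(d)=\binom{n}{d}^{-1}\sum_r(-1)^r\binom{k}{r}\binom{n-k}{d-r}\) and evaluates it at \(k=2,4\), leaving the \(k=4\) algebra implicit. You instead use that \(\sum_i\sigma_i=n-2d\) is deterministic, and expand its second and fourth powers using \(\sigma_i^2=1\) and exchangeability. Your counts \(n\), \(3n(n-1)\), \(4n(n-1)\mu_2\), \(6n(n-1)(n-2)\mu_2\), and \(n(n-1)(n-2)(n-3)\mu_4\) are right. Substituting \(n(n-1)\mu_2=(n-2d)^2-n\) reproduces the stated \(\mu_4\) exactly. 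Your route shows directly why both moments are even polynomials in \(n-2d\) and avoids binomial manipulations. The paper's hypergeometric formula gives every \(\mu_k\) in closed form at once, which is what one would want for operators with higher-body terms.
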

A proof via the hypergeometric moments of a uniformly sampled shell is given in Appendix~\ref{app:shell}.

The coefficient \(\mu_2(d)\) is purely combinatorial and contains no instance-specific information. When \(z^\star\) is a ground state, \(H(z^\star)<0\), and Eq.~\eqref{eq:parabola} traces a concave parabola with minima at \(d=0\) and \(d=n\), as seen in Fig.~\ref{fig:catalyst_ordering}(a). The shell means are therefore correctly ordered for every instance, and any overlap between shells arises from their within-shell spread.

Applying Eq.~\eqref{eq:parabola} to the combined two-body operator \(H_P+C\) gives
\begin{equation}
    \braket{H_P+C}_{d,z^\star}=\mu_2(d)
    \left[
    H_P(z^\star)+C(z^\star)
    \right].
    \label{eq:combinedparabola}
\end{equation}
The theorem thus delimits what a two-body diagonal catalyst can and cannot change. The shape of the shell-mean curve is fixed by \(\mu_2(d)\), while the catalyst controls its amplitude through \(C(z^\star)\). A catalyst with \(C(z^\star)>0\) flattens or inverts the parabola, whereas one with \(C(z^\star)<0\) pushes the shell means apart.

The latter behavior is visible in Fig.~\ref{fig:catalyst_ordering}(b), where the catalyzed means follow the same parabola with larger amplitude. The shell widths, however, are governed independently by Eq.~\eqref{eq:variance}. A useful catalyst must therefore increase the separation of the shell means relative to their within-shell fluctuations. Both objectives are met by the operator
\(
W^\star_{ij}=-z^\star_i z^\star_j,
\)
for which the energy depends only on \(d_H(z,z^\star)\) and the within-shell variance vanishes. This operator is inaccessible, however, because it requires knowledge of \(z^\star\).

We therefore estimate $W^\star$ one path at a time from the coupling signs, strengths, and graph structure. On the graph of nonzero couplings, shown in Fig. \ref{fig:construction}(a), consider an $m$-edge self-avoiding path  $p=(v_0,\ldots ,v_m)$ and propagate along it the pattern
\begin{equation}
    \label{eq:signprop}
    g^{(p)}_{v_0}=1,\qquad
    g^{(p)}_{v_{a+1}}=-\,\mathrm{sgn}\!\left(J_{v_av_{a+1}}\right)g^{(p)}_{v_a}
\end{equation}
with $a\in[0,m-1]$, which satisfies each bond in turn. Since a path contains no loops, the propagation never contradicts itself, and $g^{(p)}$ is the exact ground state of the path subgraph, computable in time linear in its length. This propagation is illustrated in Fig. \ref{fig:construction}(b). We further assign each vertex a weight given by the magnitude of its incident path couplings: an endpoint takes its single edge $w_{v_0}^{(p)} = |J_{v_0v_1}|$ and $w_{v_m}^{(p)} = |J_{v_{m-1}v_m}|$, while an interior vertex averages its two, $w_{v_i}^{(p)}=\frac{1}{2}(|J_{v_{i-1}v_i}|+|J_{v_iv_{i+1}}|)$, $i\in[1,m-1]$ so that stronger bonds weigh more. The path overlap and its penalty are then
\begin{equation}
    H_p(z)=-M_p(z)^2, \qquad M_p(z) = \sum_{a=0}^{m} w_{v_a}^{(p)} g_{v_a}^{(p)} z_{v_a}.
\end{equation}
minimized when $z$ coincides with $g^{(p)}$ or its global flip along the path. Squaring the overlap $M_p$ preserves the $\mathbb{Z}_2$ symmetry of the cost and, upon expansion, generates pure $ZZ$ couplings with no local fields.  We omit path-edge contributions, which empirically improves funneling.
Let $\mathcal{P}_m$ denote the set of $m$-edge self-avoiding paths, and let \( E(p)=\{(v_a,v_{a+1})\}_{a=0}^{m-1}\) be the edge set of $p$. After discarding the constant $-\sum_{a=0}^{m}(w_{v_a}^{(p)})^2$, summing over all $p\in\mathcal{P}_m$ gives
\begin{equation}
\widetilde{C}_{ij}^{(m)} = -2
\sum_{\substack{
p\in\mathcal{P}_m:\; i,j\in p\\
(i,j)\notin E(p)}}
w_i^{(p)}w_j^{(p)} g_i^{(p)}g_j^{(p)} .
\label{eq:catalyst}
\end{equation}
Thus, each pair $(i,j)$ receives a contribution from all paths containing both vertices, except from paths in which that pair forms an edge. Fig.~\ref{fig:construction}(b) illustrates this for a single path. The final catalyst, $C^{(m)}$, is obtained by projecting $\tilde{C}^{(m)}$ orthogonally to $J$ and normalizing it to unit peak absolute coupling. Paths that agree reinforce one another, and paths that frustration forces to disagree cancel, see Fig.~\ref{fig:construction}(c). The assembled operator is therefore typically denser than $H_P$ and reaches well beyond the original bonds. Before edge removal and projection, each path contributes a weighted rank-one estimate of $W^\star$. The weights, in turn, serve the finer goal. Configurations that frustrate the stronger bonds are penalized more, so $C^{(m)}$ tends to order states within a shell as $H_P$ does. For $m=2$, the catalyst is proportional to the off-diagonal part of $-J^2$. Raising the order $m$ extends the reach of the catalyst but degrades the fidelity of its patterns to the true ground state, a trade-off we quantify in Appendix~\ref{app:order}.\

\begin{figure}[t]
\centering
\includegraphics[width=\linewidth]{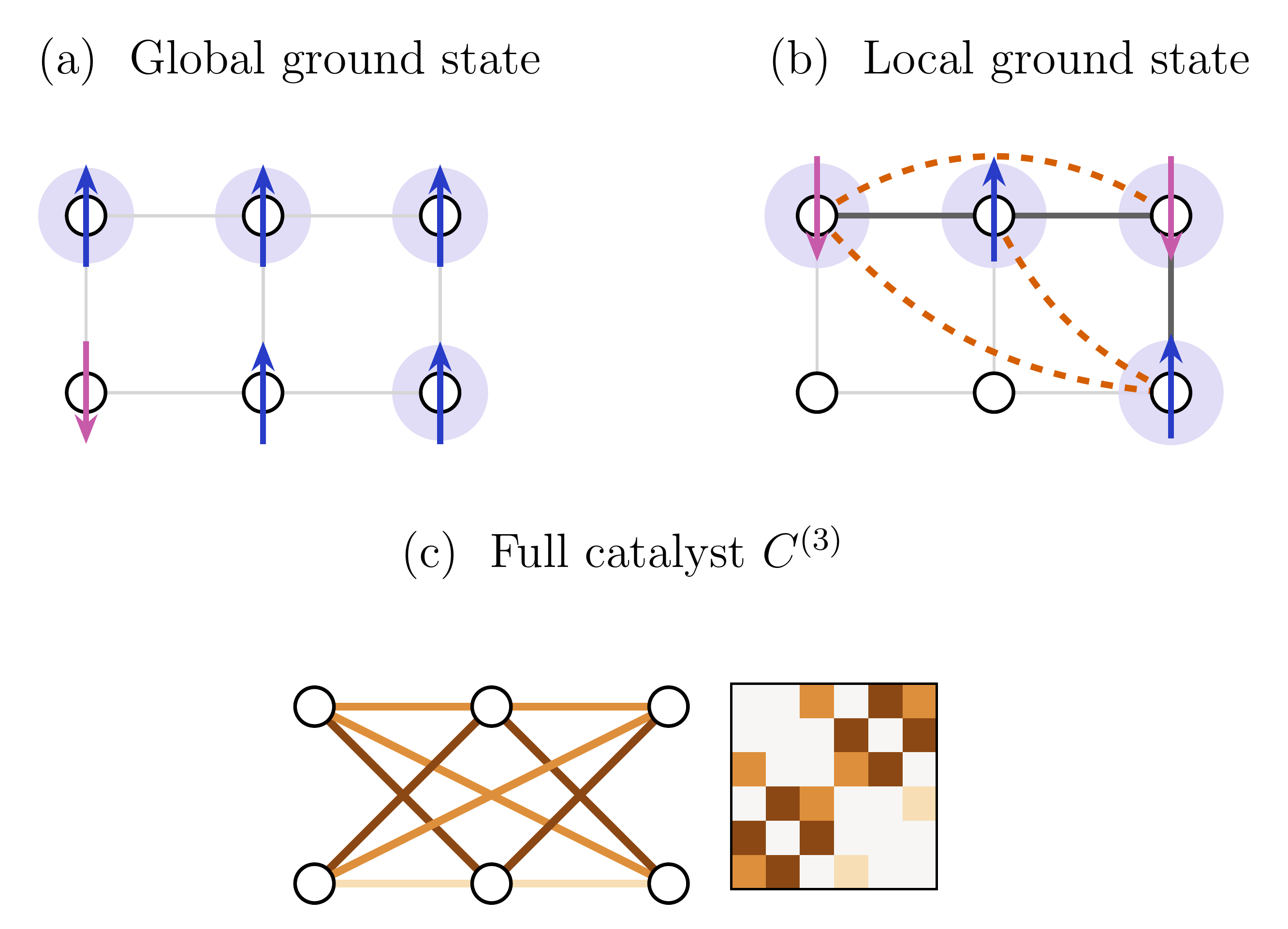}
\caption{\textbf{Open-path catalyst construction.}
(a) A global ground-state pattern \(z^\star\) minimizes the full signed problem but is unknown to the construction. (b) On a selected open path \(p\), sign propagation from Eq.~\eqref{eq:signprop} gives a locally satisfying pattern \(g^{(p)}\), which acts as a solution-independent estimate of the ground-state alignment on that path. The path penalty induces couplings between nonadjacent path vertices. (c) Summing all order-three path contributions yields the denser catalyst \(C^{(3)}\) after orthogonalization and normalization.}
\label{fig:construction}
\end{figure}

\emph{Results}---We benchmark the construction on \(200\) random \(3\)-regular instances with \(n=20\) and couplings drawn independently and uniformly from \([-1,1]\). At this size, the full classical spectrum and exact ground-state set \(\mathcal G\) are obtained by enumerating all \(2^n\) configurations. The set \(\mathcal G\) is used only for diagnostics, through Hamming distances and final-state binning, whereas the catalyst is constructed from \(J\) alone. We first verify that the catalyst reshapes the static landscape as predicted by the shell analysis (Fig.~\ref{fig:summary}, Appendix~\ref{app:static-analysis}).

We then simulate the same instances at sweep times \(T=2,6,10\). The time-dependent Hamiltonian is
\begin{equation}
H(s)=-(1-s)\sum_iX_i+sH_P+s(1-s)C^{(m)},
\end{equation}
so the final Hamiltonian and ground-state manifold are unchanged. Unless stated otherwise, quantitative comparisons use \(m=4\), while Fig.~\ref{fig:results} additionally displays orders \(2\)--\(5\).

Fig.~\ref{fig:results} shows the redistribution of population relative to the uncatalyzed protocol. In energy space, configurations are binned by the normalized energy $E/E_{\mathrm{GS}}$, where $E_{\mathrm{GS}}<0$ is the ground-state energy, so that ground states satisfy $E/E_{\mathrm{GS}}=1$. The catalyst shifts probability away from intermediate-quality energy bands and toward the best band near $E/E_{\mathrm{GS}}=1$.

We quantify the shift by the top-band mass $P_{0.95}\equiv P\!\left(E/E_{\mathrm{GS}}\ge 0.95\right)$ and, for each instance, the paired gain $\bigl(P_{0.95}^{\mathrm{cat}}-P_{0.95}^{\mathrm{van}}\bigr)/P_{0.95}^{\mathrm{van}}$. At $T=6$ the catalyst raises the median top-band mass from $0.067$ to $0.324$, with a median paired gain of $+351\%$ (interquartile range $+233\%$ to $+503\%$) and $99\%$ of instances improved; at $T=10$ the median absolute gain reaches $+0.36$ with $93\%$ of instances improved. Results are qualitatively unchanged for band thresholds $0.90$ and $0.98$ \cite{SM}.

The same redistribution is visible in Hamming space as a shift toward the ground-state manifold. The catalyzed anneal increases the probability mass at small $\delta$, most visibly in the near-solution region $\delta\le2$, where at $T=6$ the median mass rises from $0.10$ to $0.35$ ($+239\%$, $94\%$ of instances), while draining it from intermediate distances, as shown in the bottom row of Fig.~\ref{fig:results}. This concentration is valuable even when the ground state itself is not sampled, since nearby configurations can be refined with local post-processing.

Throughout, we normalize every catalyst to unit maximum coupling. The path construction can exploit additional available couplers and thereby increase the total interaction strength; this is an intended feature of the protocol. To check that the benefit comes from the added structure and not merely from this extra energy, we compare against the reinforced-problem baseline $C = H_P$, which raises the energy scale by rescaling the original couplings without activating any new coupling directions. At the shortest sweep time, $T = 2$, this reinforcement already improves the output distribution. At longer sweep times the open-path catalysts clearly outperform this equal peak-coupling control: at $T=6$ the median paired improvement over $C=H_P$ is $+77\%$ in top-band mass ($92\%$ of instances) and $+102\%$ in near-solution Hamming mass. Thus, distributing the available coupling strength according to the path construction is more effective than uniformly reinforcing $H_P$. This is consistent with the reduction of the scale-invariant shell-overlap metric in Fig.~\ref{fig:summary}(c) of Appendix~\ref{app:static-analysis}, which is independent of the energy scale by construction.

\begin{figure*}[t]
\centering
\includegraphics[width=\linewidth]{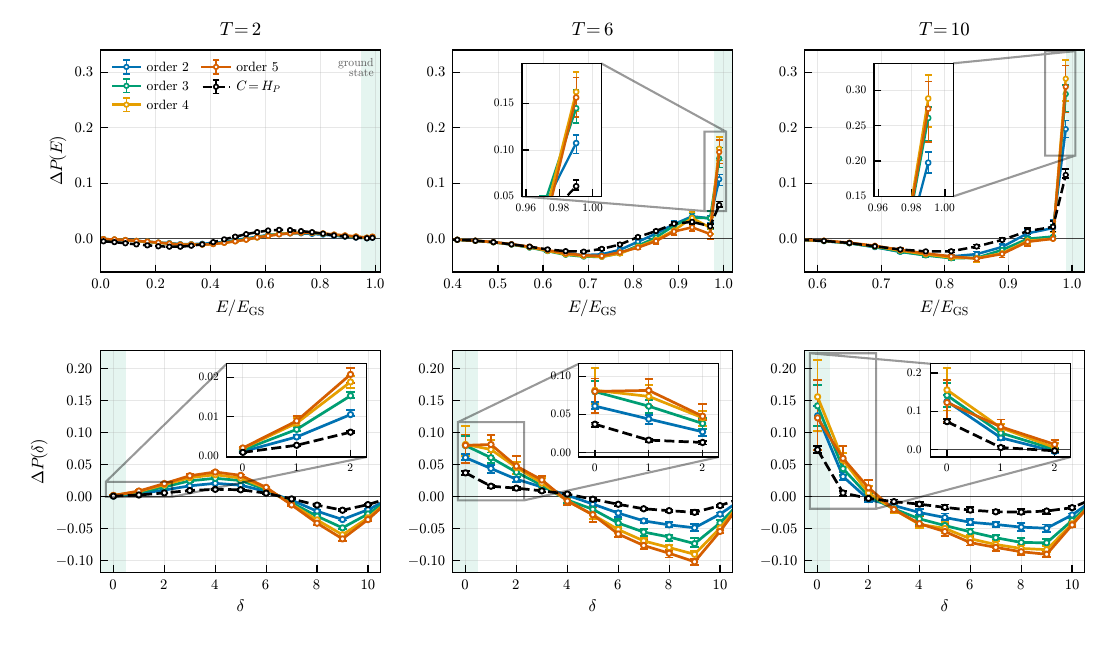}
\caption{\textbf{Probability redistribution.} Final-state probability redistribution on \(200\) \(3\)-regular instances with \(n=20\). Results are shown for sweep times \(T=2,6,10\), for open-path catalysts of orders \(2\)--\(5\), and for the reinforced-problem baseline \(C=H_P\), all relative to the uncatalyzed anneal. Curves show the median change in probability mass across instances; error bars are \(95\%\) bootstrap confidence intervals with \(B=5000\) resamples. Top: redistribution by normalized energy \(E/E_{\mathrm{GS}}\), with \(E/E_{\mathrm{GS}}=1\) corresponding to the ground-state energy. Bottom: redistribution by Hamming distance \(\delta\) to the closest ground state.}
\label{fig:results}
\end{figure*}

We repeated the analysis on denser graph families to assess how the construction depends on graph connectivity. The gains persist with an ordered decay in connectivity: at $T=6$ the median top-band improvement over the uncatalyzed anneal is $+351\%$, $+184\%$, and $+133\%$ for \(3\)-, \(4\)-, and \(5\)-regular graphs respectively, with all instances improved at $T=2$ on every family. The order that maximizes the median paired gain in top-band mass at the longer sweep times ($T=6,10$) decreases with degree ($m=4,3,2$), consistent with the trade-off of Appendix~\ref{app:order}. On the fully connected Sherrington--Kirkpatrick model the energetic gains are reduced but remain clear at short and long sweeps (median $+22\%$ and $+16\%$ in $P_{0.95}$ at $T=2$ and $T=10$), while the Hamming funneling persists at all sweep times ($+56\%$ median in $P(\delta\le2)$ at $T=2$, with every instance improved).

The choice of subgraph family is thus a design axis that sets where the catalyst places its emphasis. Richer subgraphs resolve more local structure and so sharpen the funnel in Hamming distance, but the more relative orientations a pattern fixes, the more the catalyst overwrites the ordering of the problem within each shell. For paths specifically, in dense graphs a long path ignores many chords of its induced subgraph, so the sign-propagated pattern of Eq.~\eqref{eq:signprop} becomes a less faithful ground-state estimate as the order grows. Open paths at moderate order offer a practical compromise.
\newline\newline
\emph{Discussion and conclusion}---We introduced two-body diagonal catalysts for quantum annealing that are constructed solely from the sign and magnitude structure of the problem couplings, without using information about the solution. The catalysts target the Hamming-distance geometry by deepening the mean shell funnel and reducing the overlap between Hamming shells, thereby shifting the final annealing distribution toward the ground-state manifold and its immediate neighborhood. More broadly, the shell-moment theorem provides static criteria for
catalyst design, allowing candidate constructions to be screened through
funnel depth and shell overlap before any annealing dynamics study. Being two-local and diagonal, they preserve the transverse-field structure and use the control type native to the annealers, subject to connectivity and coupling-range constraints~\cite{Johnson2011}.

Several extensions follow directly. In a digital setting, \(C\) can be inserted into QAOA \cite{Farhi2014} as an additional diagonal phase separator. The catalyst can also be combined with optimized diabatic annealing protocols~\cite{CrossonLidar2021,Feinstein2025,Werner2026}. Reverse annealing is another natural addition because its performance depends strongly on the Hamming distance between the initial state and the optimum~\cite{Ohkuwa2018ReverseAnnealingPSpin}. The scaling with system size also remains to be characterized, as well as generalizing the formalism beyond the field-free \(ZZ\) cost functions. Finally, closed-form average-case guarantees for the funneling coefficients, as a function of the problem and subgraph family, would make catalyst selection a principled decision. Taken together, these extensions could help improve the regimes in which quantum annealing may outperform classical solvers for approximate optimization~\cite{LidarScaling, qcws-8kgm}.
\newline\newline
\emph{Acknowledgments}---The authors acknowledge RES resources provided by Barcelona Supercomputing Center in MareNostrum 5 to INNO-2026-1-0004, and thank our colleagues from the Hackamonth in Singapore.

\bibliography{bibliography}

\appendix
\setcounter{secnumdepth}{3}

\section{Proof of Theorem 1}
\label{app:shell}

\emph{Proof.}---Fix a reference configuration \(z^\star\) and define the agreement variables \(x_i=z_i z_i^\star\in\{\pm1\}\), together with the aligned couplings \(\hat W_{ij}=W_{ij}z_i^\star z_j^\star\). Any two-body operator \(H(z)=\sum_{i<j}W_{ij}z_i z_j\) then reads \(H=\sum_{i<j}\hat W_{ij}x_i x_j\), with \(H(z^\star)=\sum_{i<j}\hat W_{ij}\). In these variables, a configuration in \(S_{d,z^\star}\) is specified by the set \(F=\{i:x_i=-1\}\) of its flipped positions, with \(|F|=d\). The uniform average over the shell is therefore an average over the \(\binom{n}{d}\) possible choices of \(F\).

Consider now the average of a product of spins over \(k\) distinct positions, \(\mu_k(d)=\braket{x_{i_1}\cdots x_{i_k}}_d\), from which every shell statistic below is built. The product depends only on how many of the \(k\) marked positions are flipped, \(x_{i_1}\cdots x_{i_k}=(-1)^r\), with \(r=|F\cap\{i_1,\dots,i_k\}|\). Among the \(\binom{n}{d}\) configurations of the shell, exactly \(\binom{k}{r}\binom{n-k}{d-r}\) contain \(r\) flips on the marked positions. Hence
\begin{equation}
    \label{eq:hypergeom}
    \mu_k(d) = \binom{n}{d}^{-1} \sum_{r=0}^{k} (-1)^r \binom{k}{r} \binom{n-k}{d-r},
\end{equation}
with the convention \(\binom{a}{b}=0\) for \(b<0\) or \(b>a\). Permutation symmetry of the shell makes the result independent of the chosen positions, while Vandermonde's identity \(\sum_r\binom{k}{r}\binom{n-k}{d-r}=\binom{n}{d}\) confirms the normalization.

For \(k=2\), the three terms of Eq.~\eqref{eq:hypergeom} sum to \((n-d)(n-d-1)-2d(n-d)+d(d-1)\) over \(n(n-1)\), giving
\begin{equation}
    \mu_2(d) = \frac{(n-2d)^2-n}{n(n-1)}.
\end{equation}
The shell mean then follows by linearity,
\begin{equation}
    \langle H\rangle_{d,z^\star} =
\sum_{i<j} \hat W_{ij} \langle x_i x_j\rangle_d
= \mu_2(d)H(z^\star),
\end{equation}
which proves Eq.~\eqref{eq:parabola}.

For the variance, write \(H=\sum_P\hat W_Px_P\), with \(P=\{i,j\}\) and \(x_P=x_i x_j\), so that
\begin{equation}
    \langle H^2\rangle_{d,z^\star} = \sum_{P,Q} \hat W_P\hat W_Q \langle x_Px_Q\rangle_{d,z^\star}.
\end{equation}

Since \(x_i^2=1\), the ordered pairs \((P,Q)\) fall into three classes. If \(P=Q\), the product equals \(1\) and the total weight is \(S_2=\sum_{i<j}\hat W_{ij}^2=\sum_{i<j}W_{ij}^2\). If \(P\) and \(Q\) share one index, two distinct spins remain and average to \(\mu_2\), with total weight \(U\). Expanding the squared row sums gives
\begin{equation}
    \sum_i\left(\sum_j\hat W_{ij}\right)^2=2S_2+U.
\end{equation}
Since \(\sum_j\hat W_{ij}=z_i^\star(Wz^\star)_i\), this sum also equals \(\lVert Wz^\star\rVert^2\), and therefore \(U=\lVert Wz^\star\rVert^2-2S_2\). If \(P\) and \(Q\) are disjoint, four distinct spins remain and average to \(\mu_4\), with total weight \(V\). The three classes exhaust the square of the total coefficient sum, \(H(z^\star)^2=S_2+U+V\). Hence
\begin{equation}
    \langle H^2\rangle_{d,z^\star} = S_2+\mu_2U + \mu_4\bigl[H(z^\star)^2-S_2-U\bigr],
\end{equation}
and subtracting \(\langle H\rangle_{d,z^\star}^2=\mu_2^2H(z^\star)^2\) proves Eq.~\eqref{eq:variance}.

For \(n\geq4\), evaluating Eq.~\eqref{eq:hypergeom} at \(k=4\) gives
\begin{equation}
    \label{eq:moment_4}
    \mu_4(d) = \frac{(n-2d)^4-(6n-8)(n-2d)^2+3n^2-6n}{n(n-1)(n-2)(n-3)},
\end{equation}
which proves Eq.~\eqref{eq:fourmoment}. Like \(\mu_2\), it equals unity at \(d=0,n\) and approaches the independent-flip value \([(n-2d)/n]^4\) at large \(n\).

Both identities hold for any reference \(z^\star\), not only a ground state. For the ideal couplings \(W^\star_{ij}=-z_i^\star z_j^\star\), \(i<j\), the cost depends only on \(d\) and \(\operatorname{Var}_{d,z^\star}(H^\star)=0\): every shell collapses onto its mean, which is the limit approximated by our construction. \hfill\(\square\)
\section{Order--fidelity trade-off}
\label{app:order}

Let \(\phi_p\in[0,1]\) be the fraction of vertices of path \(p\) on which the propagated pattern agrees with the reference solution, \(g^{(p)}_{v_a}=z^\star_{v_a}\). For the unprojected construction with unit weights and without path-edge removal, an order-\(m\) path has \(m+1\) vertices and
\begin{equation}
    M_p(z^\star)=(m+1)(2\phi_p-1).
    \label{eq:Mp_solution_fidelity}
\end{equation}
Expanding \(-M_p^2\), discarding the constant term, and summing over paths gives
\begin{equation}
    C_{\mathrm{id}}(z^\star) = -\sum_{p\in\mathcal P_m} (m+1)
    \left[
        (m+1)(2\phi_p-1)^2-1
    \right].
    \label{eq:funnelcoeff}
\end{equation}
An individual path therefore deepens the shell-mean funnel when
\begin{equation}
    (2\phi_p-1)^2>\frac{1}{m+1}.
    \label{eq:fidelity_threshold}
\end{equation}
Because \(M_p^2\) is invariant under a global flip of the propagated pattern, fidelities \(\phi_p\) and \(1-\phi_p\) are equivalent, and only \(\lvert2\phi_p-1\rvert\) matters. For independent random agreement, \(\mathbb{E}[(2\phi_p-1)^2]=1/(m+1)\), so Eq.~\eqref{eq:fidelity_threshold} requires a squared alignment above the random baseline. Increasing \(m\) extends the reach of the construction but can reduce the fidelity of the propagated patterns, since the exact path ground state ignores chords and loops that also constrain \(z^\star\). The preferred order is therefore instance dependent and may be larger for lower-degree or less-frustrated graphs.
\section{Structural properties}
\label{app:static-analysis}

We test whether the open-path catalyst reshapes the classical landscape as predicted by the shell analysis. Fig.~\ref{fig:summary} collects static diagnostics for the \(200\) signed \(3\)-regular instances with \(n=20\) introduced in the main text. Their exact ground-state sets \(\mathcal G\) are obtained by enumeration of all \(2^n\) configurations. Shell means and variances are evaluated using Eqs.~\eqref{eq:parabola} and \eqref{eq:variance}, which we verified against direct shell enumeration.

\begin{figure}[!h]
\centering
\includegraphics[width=0.8\linewidth]{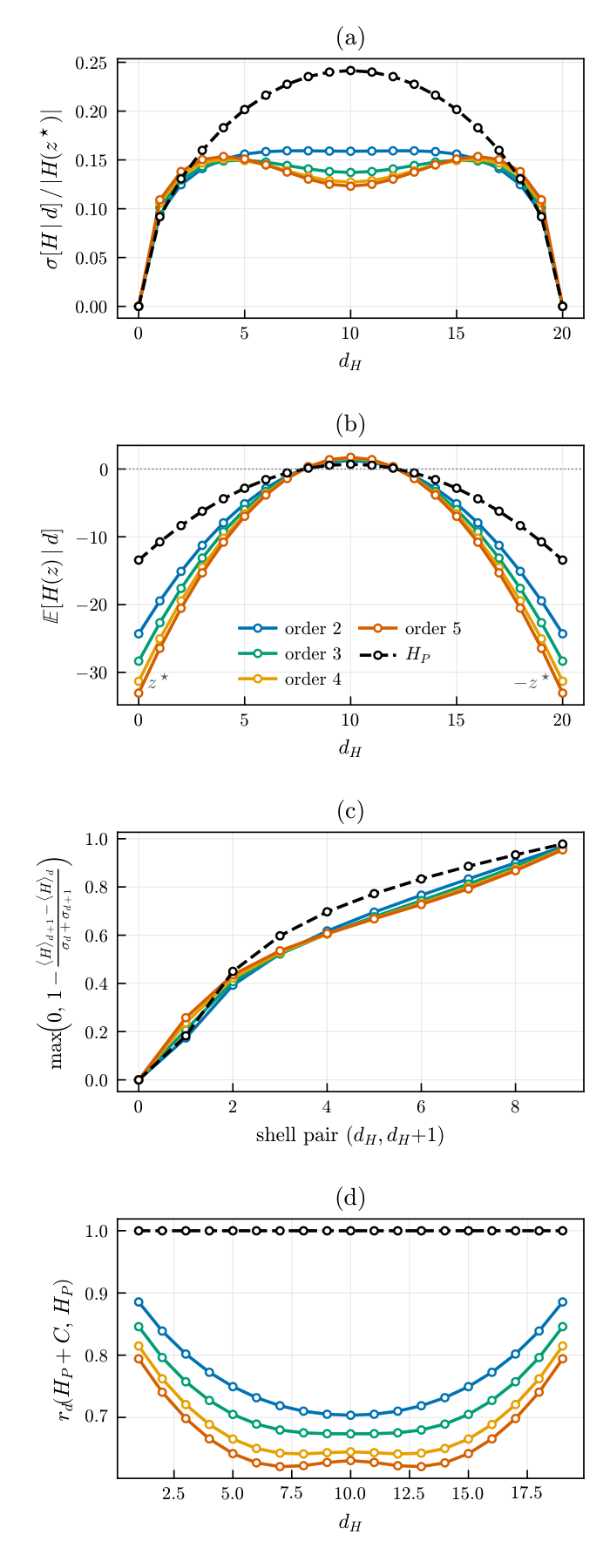}
\caption{\textbf{Static landscape diagnostics.}
Results on \(200\) signed \(3\)-regular instances with \(n=20\). The catalyzed operators \(H_{\mathrm{comb}}^{(m)}=(H_P+C^{(m)})/\kappa_m\), with \(m=2,\ldots,5\), are compared with \(H_P\). (a) Relative within-shell spread \(\sigma[H\mid d]/|H(z^\star)|\). (b) Shell mean \(\langle H\rangle_d\). (c) Adjacent-shell overlap \(\mathrm{ov}_d\). (d) Within-shell Pearson correlation \(r_d(H_{\mathrm{comb}}^{(m)},H_P)\). The catalyst deepens the shell-mean funnel, narrows the shells relatively, and lowers their scale-invariant overlap while retaining positive correlation with the original within-shell ordering.}
\label{fig:summary}
\end{figure}

Each combined operator is normalized to the same peak coupling as the bare problem,
\begin{equation}
    H_{\mathrm{comb}}^{(m)} = \frac{H_P+C^{(m)}}{\kappa_m}, \qquad \kappa_m =
    \max_{i<j}\left|J_{ij}+C^{(m)}_{ij}\right|.
    \label{eq:static_combined_operator}
\end{equation}
The catalyst acts on both fronts identified in the main text. The shell means separate, with \(\lvert H_{\mathrm{comb}}^{(m)}(z^\star)\rvert\) increasing from approximately \(13\) for \(H_P\) to \(24\)--\(33\), depending on the order Fig.~\ref{fig:summary}(b). At the same time, the typical relative shell width \(\sigma_d/|H(z^\star)|\) decreases from about \(0.19\) to \(0.14\) Fig.~\ref{fig:summary}(a).

We quantify the resulting separation of neighboring shells on the near-solution branch \(d<n/2\) through
\begin{equation}
    \mathrm{ov}_d = \max\left[0,\, 1-
    \frac{
    \langle H\rangle_{d+1}-\langle H\rangle_d
    }{
    \sigma_d+\sigma_{d+1}
    }
    \right].
    \label{eq:shell_overlap}
\end{equation}
This measures the overlap of adjacent one-standard-deviation bands: \(\mathrm{ov}_d=1\) corresponds to coincident bands and \(\mathrm{ov}_d=0\) to separated bands. The overlap decreases from approximately \(0.70\) for \(H_P\) to \(0.65\) for the catalyzed operators Fig.~\ref{fig:summary}(c). Although modest, this reduction is systematic and cannot arise from a global rescaling, since \(\mathrm{ov}_d\) is invariant under \(H\to cH\). The catalyst therefore changes the ratio of shell spacings to shell widths, rather than only the overall energy scale.

To assess whether this reshaping preserves the original energetic ordering within each shell, we compute the shell-resolved Pearson correlation \(r_d(H_{\mathrm{comb}}^{(m)},H_P)\) over configurations in \(S_{d,z^\star}\). It remains positive and sizeable, particularly near the solution Fig.~\ref{fig:summary}(d), showing that the catalyst improves inter-shell separation without fully scrambling the ordering within each shell. The static gains saturate with order, with \(m=3\)--\(5\) giving similar diagnostics on these \(3\)-regular instances, consistent with Appendix~\ref{app:order}.

Finally, peak-coupling normalization does not fix the Frobenius norm, because the path construction introduces additional two-body couplings. At the orders used in the main comparisons, however, the median ratio \(\lVert C^{(m)}\rVert_F/\lVert H_P\rVert_F\) is \(1.08\) (\(m=4\), \(3\)-regular), \(1.01\) (\(m=3\), \(4\)-regular), and \(0.97\) (\(m=2\), \(5\)-regular), and lies below unity on \(28\%\), \(48\%\), and \(59\%\) of instances, respectively Fig.~\ref{fig:sm-frobenius} of the Supplemental Material. Thus, at the selected orders the catalysts have total coupling weight comparable to that of \(H_P\), while the reduction of \(\mathrm{ov}_d\) is independent of energy scale by construction.

\clearpage
\onecolumngrid
\begin{center}
  \textbf{\large Supplemental Material for\\[2pt]
  ``Reshaping quantum annealing landscapes
  with diagonal catalysts''}
\end{center}
\suppressfloats[t]
\setcounter{figure}{0}   \renewcommand{\thefigure}{S\arabic{figure}}
\setcounter{table}{0}    \renewcommand{\thetable}{S\arabic{table}}
\setcounter{equation}{0} \renewcommand{\theequation}{S\arabic{equation}}
\section*{Relative-improvement statistics}

This Supplemental Material documents the paired per-instance statistics quoted in the main text and shows their robustness to the choice of energy threshold, graph family, and baseline. For each graph family of three-, four-, and five-regular graphs with couplings drawn independently and uniformly from $[-1,1]$, with $200$ instances each, and $n=20$, every protocol is simulated on the same instances, so each protocol is compared with the baseline instance by instance.\\

Our comparisons rest on two observables of the final-state distribution. In energy space, the top-band mass $P_{\rho_0} \equiv P\!\left(E/E_{\mathrm{GS}} \ge \rho_0\right)$ is the probability of sampling within a fraction $\rho_0$ of the ground-state energy. The main text quotes $\rho_0=0.95$, and we also report $\rho_0=0.90$ and $0.98$. In Hamming space, the near-solution mass $P(\delta\le2)$ is the probability of sampling within two spin flips of the ground-state manifold. The improvements are always assessed instance by instance. For an observable $P$, the paired gain over a baseline is $\left(P^{\mathrm{cat}} - P^{\mathrm{base}}\right)/P^{\mathrm{base}}$, with both values evaluated on the same instance, and we report the median and interquartile range of this gain across the ensemble, together with the fraction of instances that strictly improve. The path order is fixed per family, $m=4,3,2$ for the three-, four-, and five-regular graphs, with no per-panel or per-instance tuning.\\

We first compare against the uncatalyzed anneal. Fig.~\ref{fig:sm-rho95} shows the catalyzed versus uncatalyzed top-band mass $P_{0.95}$, one point per instance, for the three families and sweep times $T=2,6,10$ on logarithmic axes. Points above the diagonal are improved instances, and each panel is annotated with the median paired gain and the fraction improved. Three features stand out. First, the improvement is near-uniform across the ensemble. At $T=2$ and $T=6$ essentially every instance improves, and at $T=10$ the degraded instances form a small tail, between $5$ and $15\%$ depending on the family. Second, relative gains decrease with sweep time as the uncatalyzed anneal becomes more adiabatic, while absolute gains grow. Third, gains decrease smoothly with connectivity, consistent with the order--fidelity trade-off of Appendix~B of the main text. Figs.~\ref{fig:sm-rho90} and \ref{fig:sm-rho98} repeat the analysis at $\rho_0=0.90$ and $0.98$. Tightening the band increases the relative gains and mildly reduces the fraction improved, leaving the qualitative picture unchanged. The same conclusions hold in Hamming space, where Fig.~\ref{fig:sm-hamming-van} uses the near-solution mass $P(\delta\le2)$ in place of the energy band. Table~\ref{tab:sm-vanilla} collects the medians.\\ 

We next compare against the reinforced-problem baseline. Figs.~\ref{fig:sm-baseline} and \ref{fig:sm-hamming-base} show the same per-instance comparisons with the equal peak-coupling control $C=H_P$ in place of the bare anneal. The gains remain positive for the large majority of instances, showing that uniform reinforcement of the original problem couplings does not reproduce the benefit of the open-path construction. Table~\ref{tab:sm-baseline} collects the corresponding medians.\\

Finally, we test the construction on fully connected instances. Figs.~\ref{fig:sm-sk-energy} and \ref{fig:sm-sk-hamming} show the paired comparisons for $200$ instances of Sherrington--Kirkpatrick type, with couplings drawn independently and uniformly from $[-1,1]$, for all path orders $m=2$--$5$. On the complete graph no order is singled out, and the results are nearly order-independent. In energy space the gains are strongly reduced relative to the sparse families but remain clear at short and long sweeps. For $m=2$, the median paired gain in $P_{0.95}$ is $+22\%$ at $T=2$, with $79\%$ of instances improved, and $+16\%$ at $T=10$, with $88\%$ improved. The dependence on sweep time is non-monotone, and near $T=6$ the median gain over the uncatalyzed anneal crosses zero, $+3\%$ in $P_{0.95}$ and $+8\%$ at the tightest band $\rho_0=0.98$. Even there, however, the catalyst outperforms the reinforced-problem control by $+20\%$, because the reinforcement alone degrades the uncatalyzed output, with a median $P_{0.95}$ of $0.200$ against $0.226$. The Hamming concentration, in contrast, is robust at all sweep times. The median paired gain in $P(\delta\le2)$ is $+56\%$ at $T=2$, with every instance improved, $+28\%$ at $T=6$, and $+25\%$ at $T=10$. These results realize the dense limit of the order--fidelity trade-off of Appendix~\ref{app:order}. On the complete graph the propagated path patterns retain little fidelity to the energetic ordering of the problem, which compresses the energy-band gains from severalfold to tens of percent, while the geometric concentration toward the ground-state manifold survives intact.

\begin{figure*}[p]
\centering
\includegraphics[width=\linewidth]{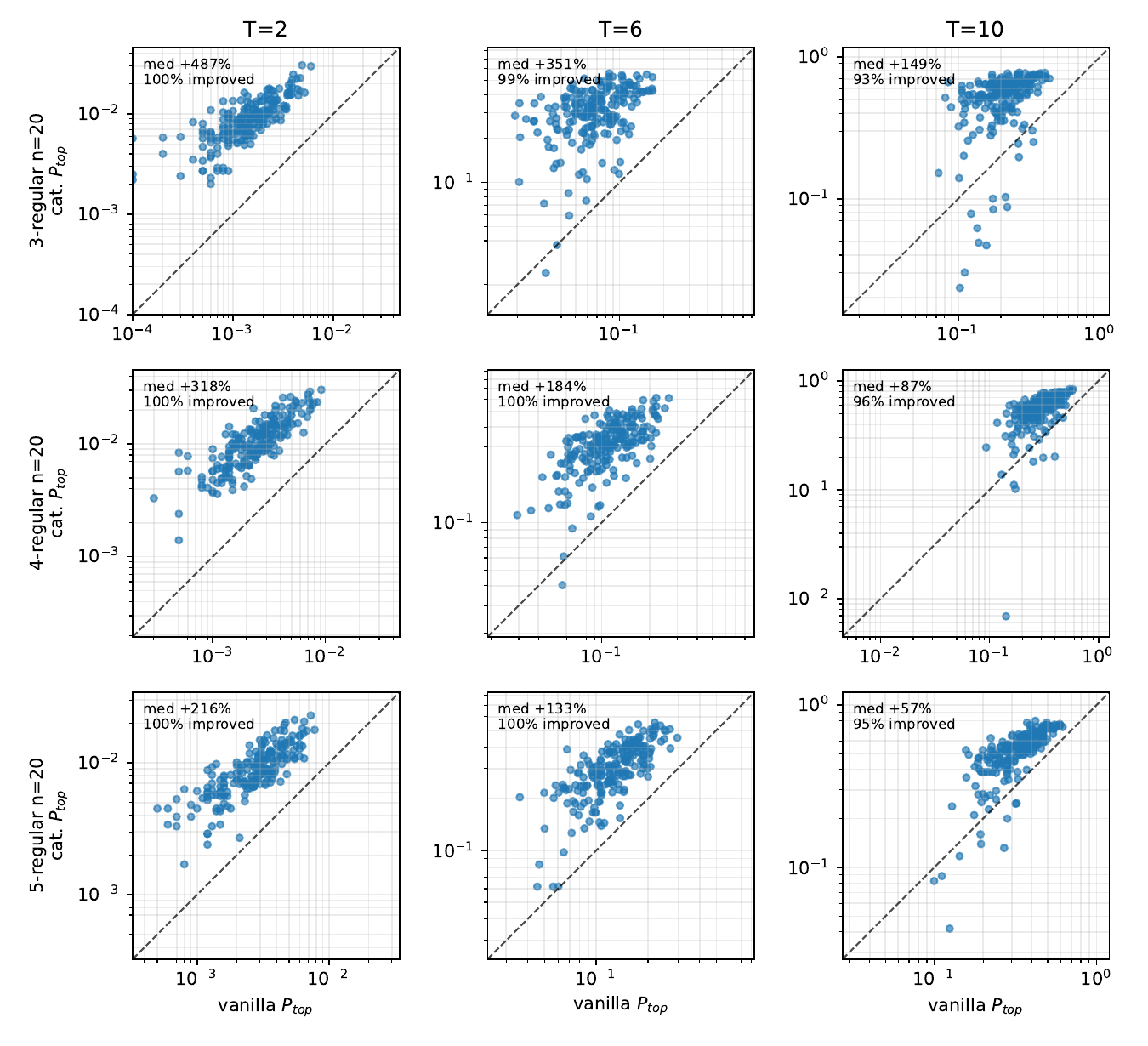}
\caption{\textbf{Paired top-band improvement over the uncatalyzed anneal.} Catalyzed versus uncatalyzed top-band mass $P_{0.95}$, one point per instance, on logarithmic axes. Rows: 3-, 4-, and 5-regular families ($200$ instances each, $n=20$, fixed orders $m=4,3,2$). Columns: sweep times $T=2,6,10$. The dashed line marks equality; annotations give the median paired gain and the fraction of instances improved.}
\label{fig:sm-rho95}
\end{figure*}

\begin{figure*}[p]
\centering
\includegraphics[width=\linewidth]{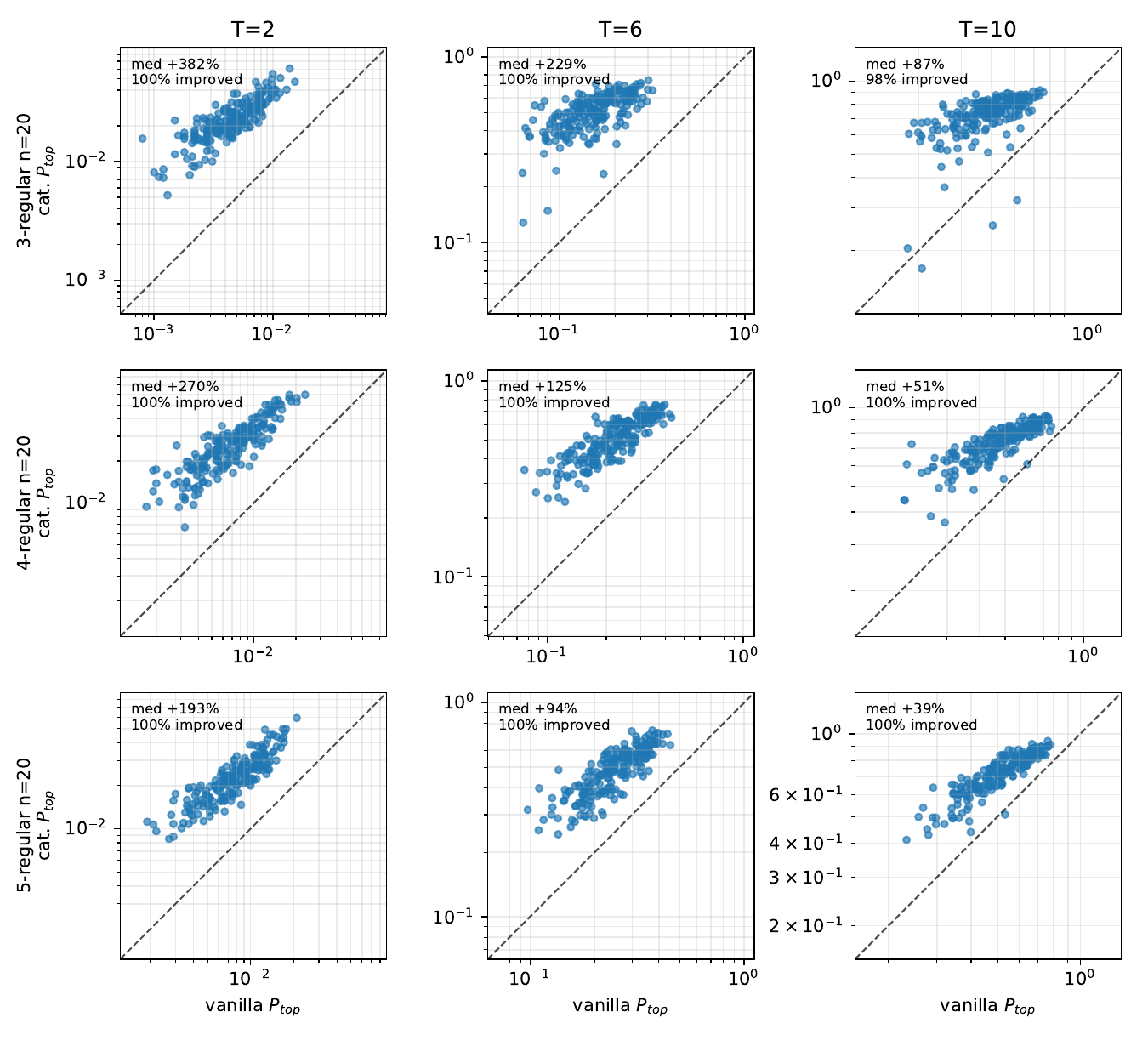}
\caption{\textbf{Threshold robustness, $\rho_0=0.90$.} As in Fig.~\ref{fig:sm-rho95}, for band threshold $\rho_0=0.90$.}
\label{fig:sm-rho90}
\end{figure*}

\begin{figure*}[p]
\centering
\includegraphics[width=\linewidth]{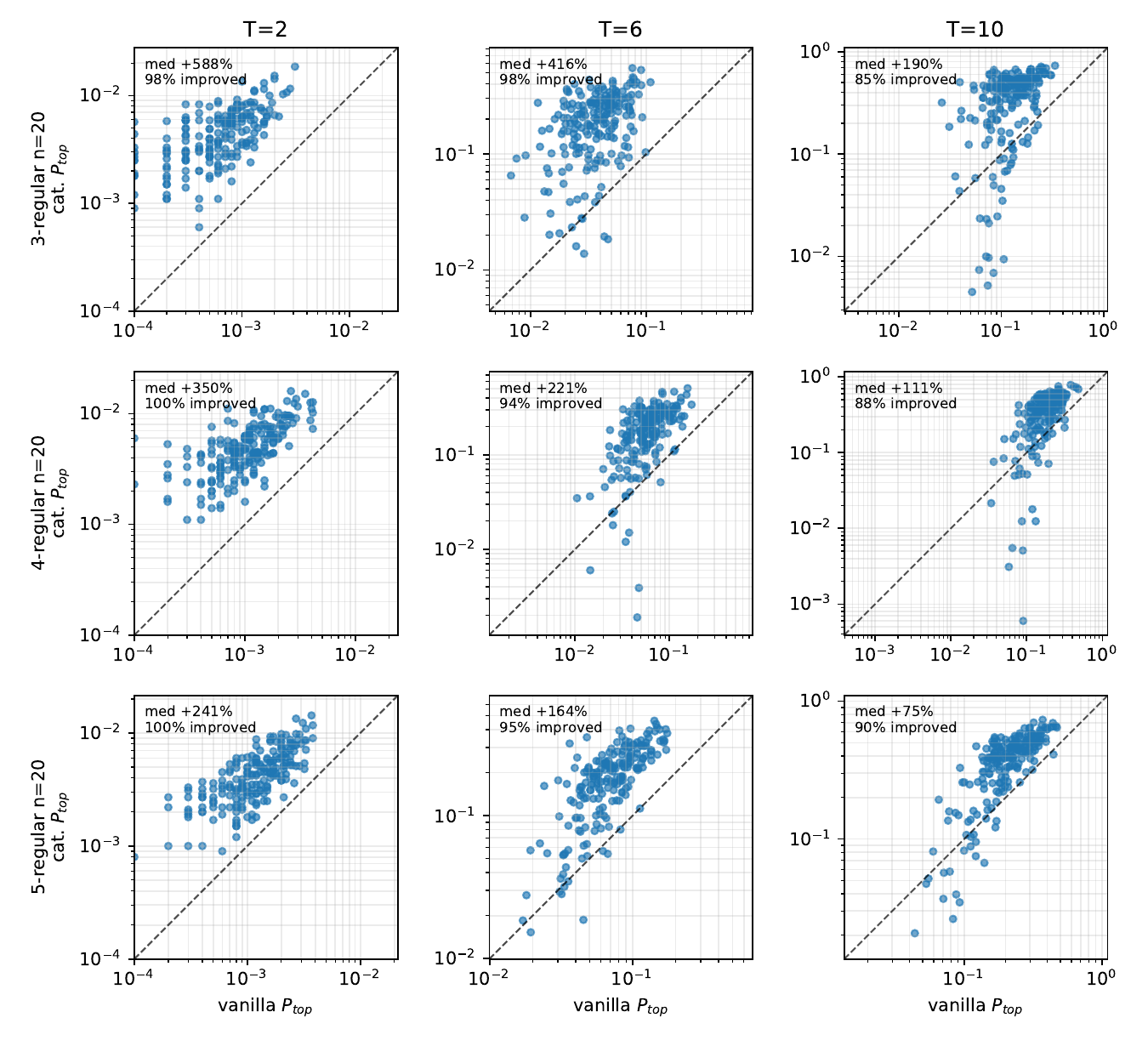}
\caption{\textbf{Threshold robustness, $\rho_0=0.98$.} As in Fig.~\ref{fig:sm-rho95}, for band threshold $\rho_0=0.98$.}
\label{fig:sm-rho98}
\end{figure*}

\begin{figure*}[p]
\centering
\includegraphics[width=\linewidth]{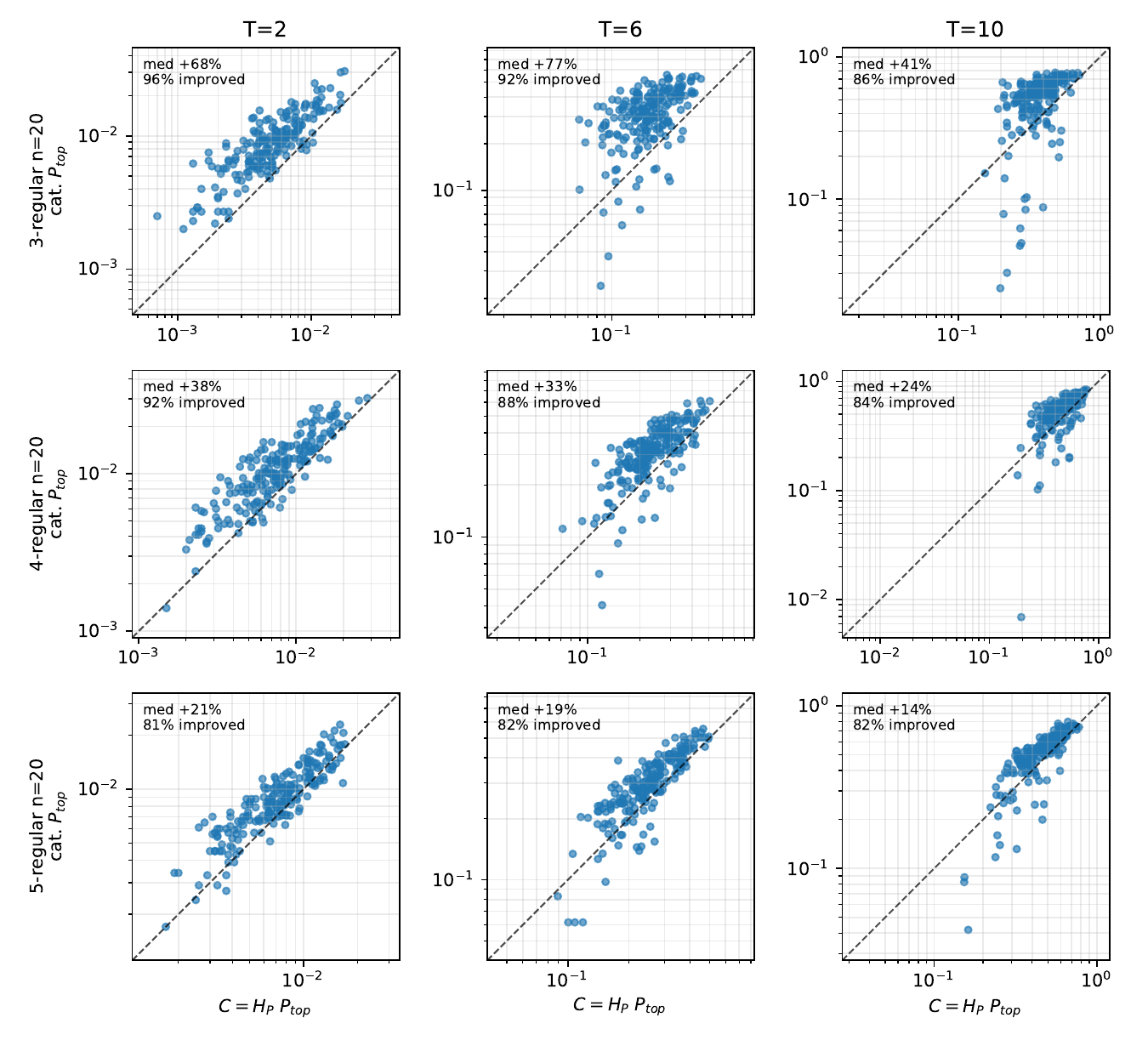}
\caption{\textbf{Paired top-band improvement over the reinforced-problem baseline.} As in Fig.~\ref{fig:sm-rho95}, with the equal-budget control $C=H_P$ as the baseline instead of the uncatalyzed anneal.}
\label{fig:sm-baseline}
\end{figure*}

\begin{figure*}[p]
\centering
\includegraphics[width=\linewidth]{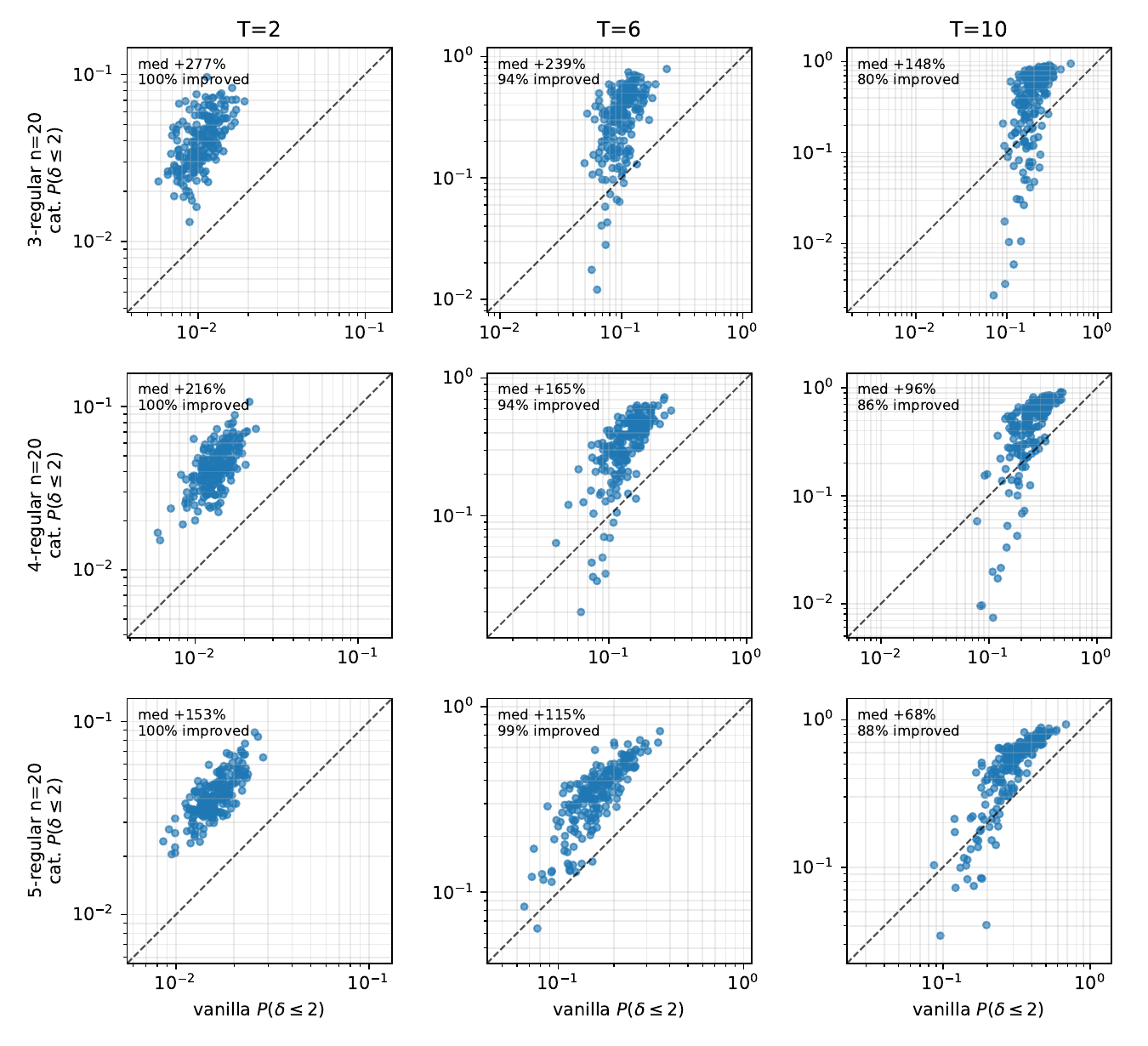}
\caption{\textbf{Near-solution Hamming mass, uncatalyzed baseline.} Catalyzed versus uncatalyzed $P(\delta\le2)$, one point per instance.}
\label{fig:sm-hamming-van}
\end{figure*}

\begin{figure*}[p]
\centering
\includegraphics[width=\linewidth]{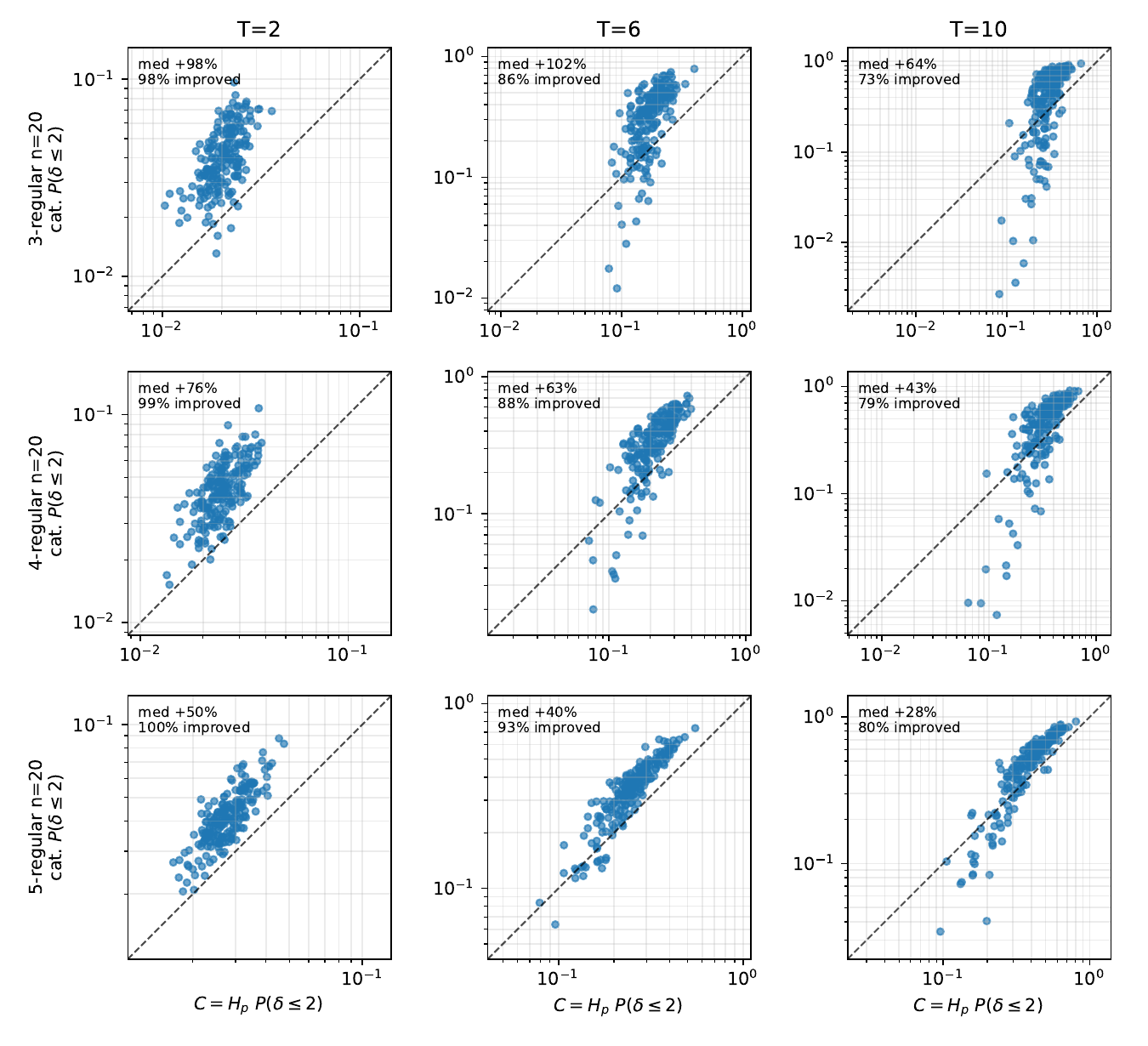}
\caption{\textbf{Near-solution Hamming mass, reinforced-problem baseline.} Catalyzed versus $C=H_P$ $P(\delta\le2)$, one point per instance.}
\label{fig:sm-hamming-base}
\end{figure*}

\begin{figure*}[p]
\centering
\includegraphics[width=\linewidth]{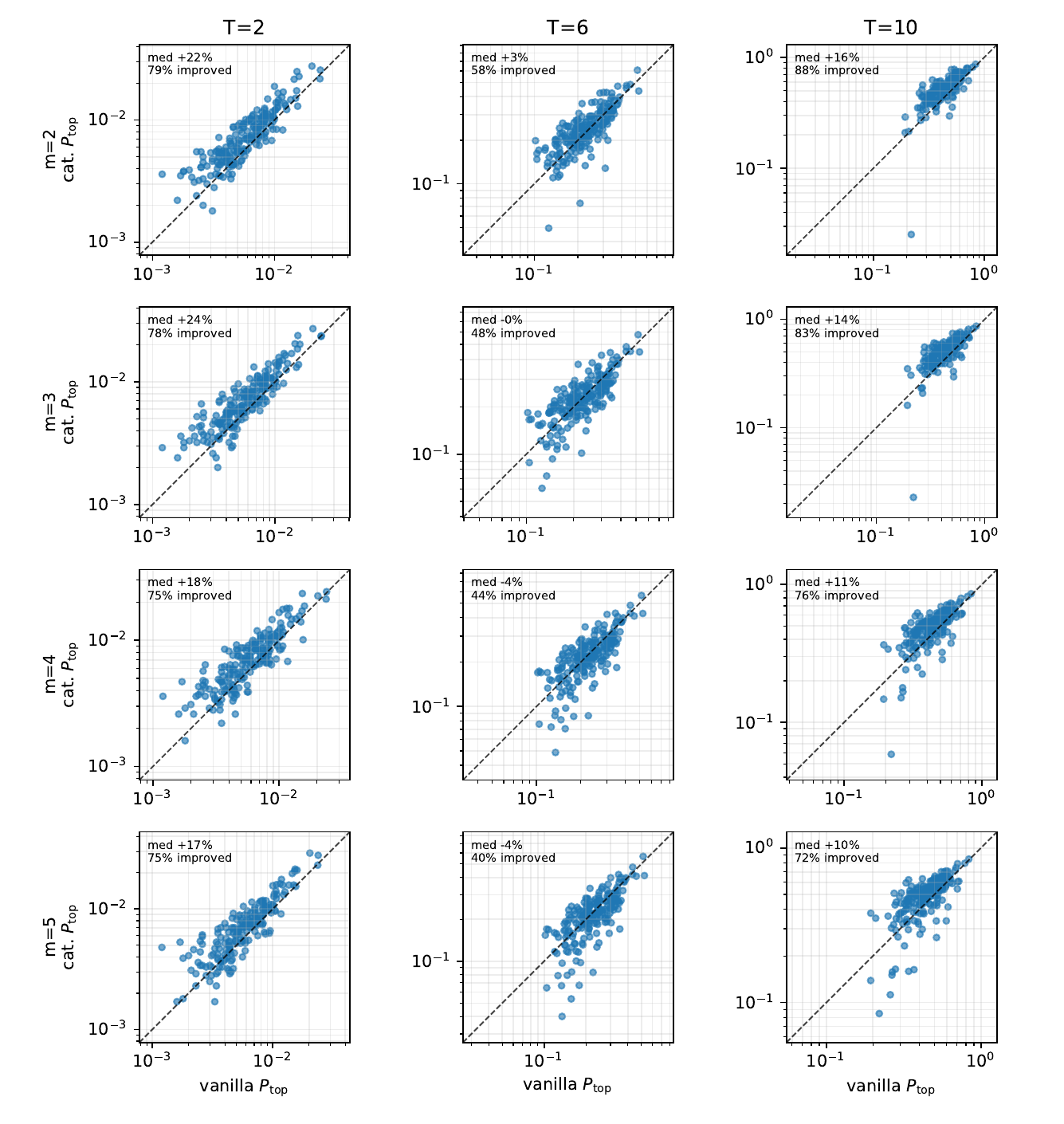}
\caption{\textbf{Fully connected instances, top-band mass.} Catalyzed versus uncatalyzed $P_{0.95}$ on $200$ Sherrington--Kirkpatrick-type instances ($n=20$). Rows: path orders $m=2$--$5$. Columns: sweep times $T=2,6,10$.}
\label{fig:sm-sk-energy}
\end{figure*}

\begin{figure*}[p]
\centering
\includegraphics[width=\linewidth]{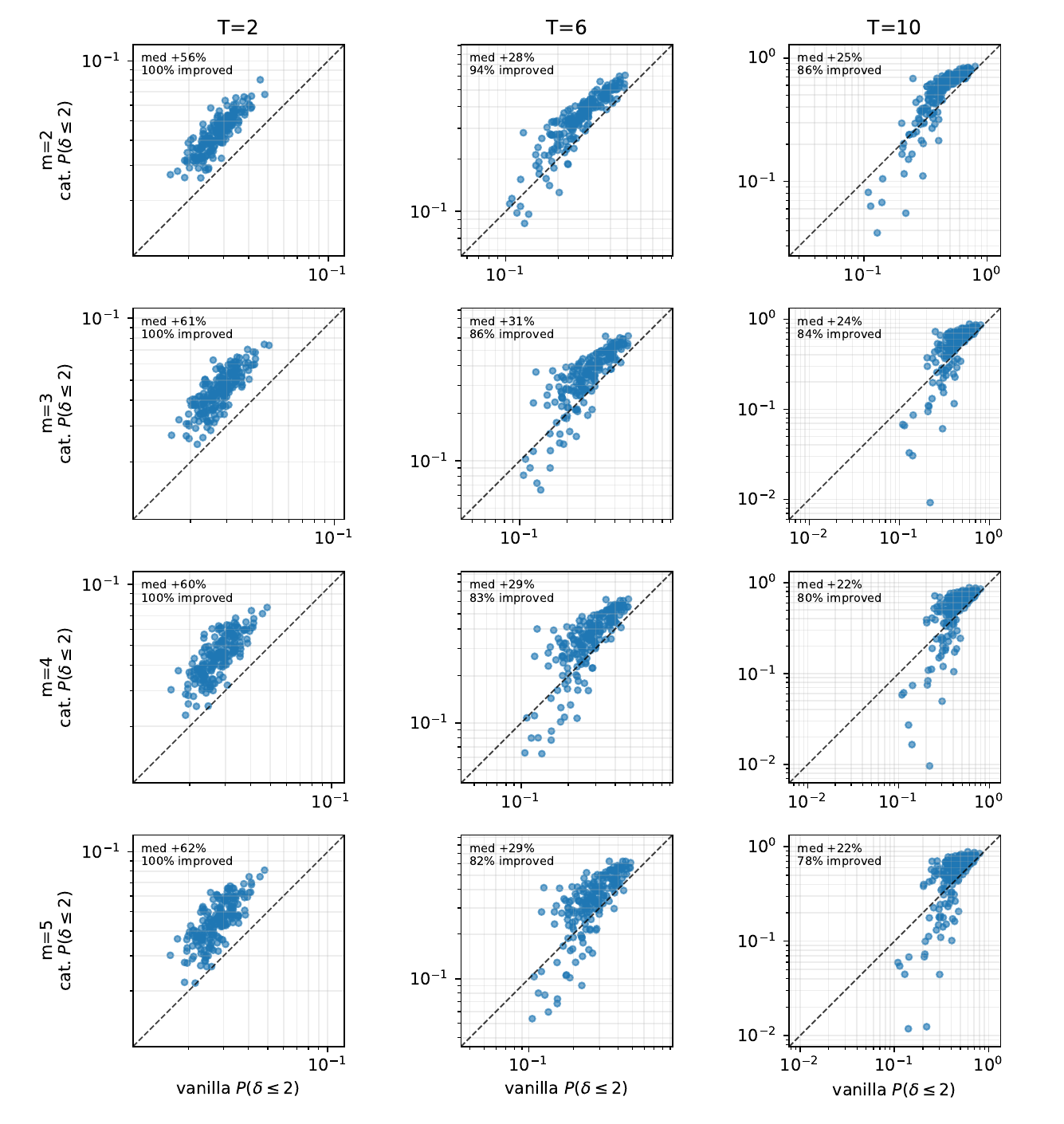}
\caption{\textbf{Fully connected instances, near-solution Hamming mass.} As in Fig.~\ref{fig:sm-sk-energy}, for $P(\delta\le2)$.}
\label{fig:sm-sk-hamming}
\end{figure*}

\begin{figure*}[p]
\centering
\includegraphics[width=\linewidth]{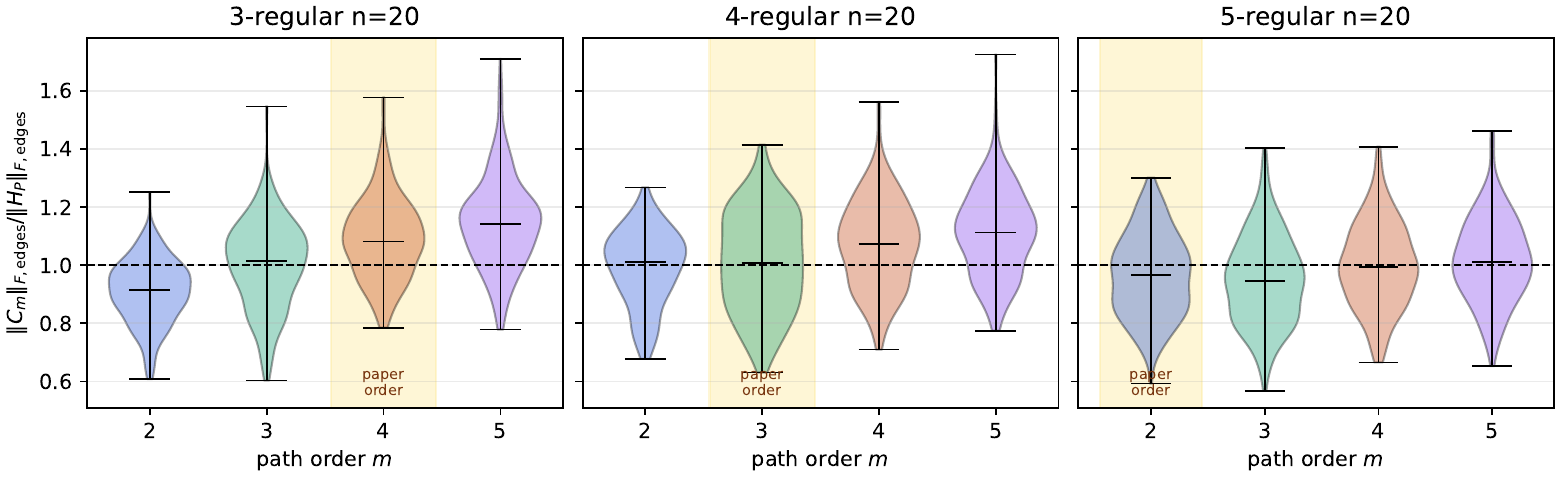}
\caption{\textbf{Coupling-budget comparison.} Distribution across instances of the Frobenius-norm ratio $\lVert C^{(m)}\rVert_F/\lVert H_P\rVert_F$ at unit peak coupling ($\max_{i<j}|W_{ij}|=1$ for each operator), for path orders $m=2$--$5$ on the three graph families. Shaded columns mark the orders used in the main text. At those orders the median ratio is $1.08$, $1.01$, and $0.97$, so the catalysts' advantage over the reinforced-problem control is obtained at essentially matched total coupling weight.}
\label{fig:sm-frobenius}
\end{figure*}
\begin{table*}[p]
\caption{\textbf{Median paired gains over the uncatalyzed anneal} (with fraction of instances improved), for fixed orders $m=4,3,2$ on 3-, 4-, and 5-regular families.}
\label{tab:sm-vanilla}
\begin{ruledtabular}
\begin{tabular}{llcccc}
Family & $T$ & $P_{0.90}$ & $P_{0.95}$ & $P_{0.98}$ & $P(\delta\le2)$ \\
\hline
3-regular & 2  & $+382\%$ (100\%) & $+487\%$ (100\%) & $+588\%$ (100\%) & $+277\%$ (100\%) \\
          & 6  & $+229\%$ (100\%) & $+351\%$ (99\%)  & $+416\%$ (98\%)  & $+239\%$ (94\%)  \\
          & 10 & $+87\%$ (98\%)   & $+149\%$ (93\%)  & $+190\%$ (85\%)  & $+148\%$ (80\%)  \\
4-regular & 2  & $+270\%$ (100\%) & $+318\%$ (100\%) & $+350\%$ (100\%) & $+216\%$ (100\%) \\
          & 6  & $+125\%$ (100\%) & $+184\%$ (100\%) & $+221\%$ (94\%)  & $+165\%$ (94\%)  \\
          & 10 & $+51\%$ (100\%)  & $+87\%$ (96\%)   & $+111\%$ (88\%)  & $+96\%$ (86\%)   \\
5-regular & 2  & $+193\%$ (100\%) & $+216\%$ (100\%) & $+241\%$ (100\%) & $+153\%$ (100\%) \\
          & 6  & $+94\%$ (100\%)  & $+133\%$ (100\%) & $+164\%$ (95\%)  & $+115\%$ (99\%)  \\
          & 10 & $+39\%$ (100\%)  & $+57\%$ (95\%)   & $+75\%$ (90\%)   & $+68\%$ (88\%)   \\
\end{tabular}
\end{ruledtabular}
\end{table*}

\begin{table}[p]
\caption{\textbf{Median paired gains over the reinforced-problem baseline} $C=H_P$ (with fraction improved), at $\rho_0=0.95$ and in near-solution Hamming mass.}
\label{tab:sm-baseline}
\begin{ruledtabular}
\begin{tabular}{llcc}
Family & $T$ & $P_{0.95}$ & $P(\delta\le2)$ \\
\hline
3-regular & 2  & $+68\%$ (96\%) & $+98\%$ (98\%)  \\
          & 6  & $+77\%$ (92\%) & $+102\%$ (86\%) \\
          & 10 & $+41\%$ (86\%) & $+64\%$ (73\%)  \\
4-regular & 2  & $+38\%$ (92\%) & $+76\%$ (99\%)  \\
          & 6  & $+33\%$ (88\%) & $+63\%$ (88\%)  \\
          & 10 & $+24\%$ (84\%) & $+43\%$ (79\%)  \\
5-regular & 2  & $+21\%$ (81\%) & $+50\%$ (100\%) \\
          & 6  & $+19\%$ (82\%) & $+40\%$ (93\%)  \\
          & 10 & $+14\%$ (82\%) & $+28\%$ (80\%)  \\
\end{tabular}
\end{ruledtabular}
\end{table}

\newpage
\section*{Numerical dynamics and normalization}
\label{app:numerics}
All dynamical data were generated by direct state-vector simulation of the time-dependent Hamiltonian
\begin{equation}
H(t) = (1-s)H_D + sH_P + s(1-s)C^{(m)},\qquad
s=t/T ,
\end{equation}
with transverse-field driver
\begin{equation}
    H_D=-\sum_i X_i,
\end{equation}
problem Hamiltonian
\begin{equation}
    H_P=\sum_{i<j} J_{ij} Z_i Z_j ,
\end{equation}
and catalyst Hamiltonian
\begin{equation}
    C^{(m)}=\sum_{i<j} C^{(m)}_{ij} Z_i Z_j .
\end{equation}
The initial state was \(|+\rangle^{\otimes n}\). Time evolution was computed with the QiliSDK \cite{qilisdk} state-vector backend using its fixed-step \texttt{TrotterizedSchedule} and \texttt{DigitalPropagation} routines, followed by computational-basis sampling. The production data use time step \(\Delta t=0.1\) and \(10^4\) measurement samples per method, instance, and annealing time. No adaptive ODE tolerance is used; the numerical accuracy parameter is the fixed product-formula step size \(\Delta t\). The scripts store \(\Delta t\), \(T\), and the sampling count in every output file, and the same instances can be rerun at smaller \(\Delta t\) for step-size checks.The problem couplings are peak-normalized before any dynamics:
\begin{equation}
J_{ij}\leftarrow \frac{J_{ij}}{\max_{a<b}|J_{ab}|},
\qquad \max_{i<j}|J_{ij}|=1 .
\end{equation}

Reverse path orientations are not counted separately: a path and its reversal are identified, and only one canonical orientation contributes. To remove the component of the catalyst parallel to the problem Hamiltonian, we orthogonalize \(C^{(m)}_0\) against \(J\) using the upper-triangular edge inner product

\begin{equation}
\langle A,B\rangle_E=\sum_{i<j} A_{ij}B_{ij}.
\end{equation}
The projected catalyst is
\begin{equation}
\widetilde C^{(m)} = C^{(m)}_0 -
\frac{\langle C^{(m)}_0,J\rangle_E}{\langle J,J\rangle_E}J .
\end{equation}
It is then peak-normalized as
\begin{equation}
C^{(m)}_{ij}
= \frac{\widetilde C^{(m)}_{ij}} {\max_{a<b}|\widetilde C^{(m)}_{ab}|}.
\end{equation}
All Hamming catalysts in the main comparison therefore also satisfy \(\max_{i<j}|C^{(m)}_{ij}|=1\), and the comparison with \(C=H_P\) is at equal peak coupling. \\

Random \(k\)-regular instances are sampled by a configuration-model stub matching procedure with rejection of self-loops and parallel edges, until a simple \(k\)-regular graph is obtained. The required conditions are \(n\ge k+1\) and \(nk\) even. We do not additionally condition on graph connectedness. Couplings are then assigned independently on the graph support from the specified distribution, and finally rescaled to unit peak coupling as above. Complete-graph instances use all pairs \(i<j\) as the support, with independent signed continuous couplings before the same peak normalization. \\

Ground states are found by exhaustive enumeration for the simulated sizes. All computational-basis states with energy \(E\le E_{\min}+10^{-9}\) are retained as ground states. Hamming-distance observables are computed using the distance to the nearest member of this ground-state manifold, which automatically includes the global spin-flip degeneracy and any additional accidental degeneracies. Relative gains are computed instance by instance as

\begin{equation}
R_i = \frac{P_i^{\mathrm{cat}}-P_i^{\mathrm{base}}}
     {P_i^{\mathrm{base}}}.
\end{equation}
If \(P_i^{\mathrm{base}}=0\), the ratio is treated as undefined and omitted from ratio summaries; this does not occur for the main \(\rho_0=0.95\) analyses, but occurs in three auxiliary short-time comparisons for 3-regular instances at \(\rho_0=0.98\). Very small but nonzero baseline probabilities are not clipped in the numerical summaries. Clipping to a floor is used only for log-scale visualization of scatter plots, not for computing medians, quantiles, or fractions of improved instances. \\

\emph{Data availability.}
The instance ensembles, final-state distributions, and per-instance statistics underlying Figs.~S1--S9 and Tables~S1--S2, together with the analysis and plotting scripts, are openly available at
\url{https://github.com/qilimanjaro-tech/diagonal-catalysts-data}.

\end{document}